\newsavebox{\figurebox}
\newlength{\figureboxheight}
\title{Greedy Shortest Common Superstring Approximation in Compact Space}
\titlerunning{Greedy Shortest Common Superstring Approximation in Compact Space} 
\author{Jarno Alanko \and Tuukka Norri}
\authorrunning{J. Alanko and T. Norri} 
\institute{Department of Computer Science \\ University of Helsinki \\
Gustaf Hällströmin katu 2b, 00560 Helsinki, Finland \\ 
\email{jarno.alanko@helsinki.fi}, \email{tuukka.norri@helsinki.fi}}
\begin{document}

\maketitle

\setcounter{footnote}{0}

\begin{abstract}
Given a set of strings, the shortest common
superstring problem is to find the shortest
possible string that contains all the input
strings. The problem is NP-hard, but a 
lot of work has gone into designing approximation
algorithms for solving the problem. 
We present the first time and space efficient 
implementation of the classic greedy heuristic which merges
strings in decreasing order of overlap 
length. Our implementation works in $O(n \log \sigma)$ time and bits of space,
where $n$ is the total length of the input
strings in characters, and $\sigma$ is the size of the
alphabet. After index construction, a practical implementation of our algorithm
uses roughly $5 n \log \sigma$ bits of space
and reasonable time for a real dataset that consists
of DNA fragments.
\keywords{Greedy, Approximation, Compact, Space-efficient, Burrows-Wheeler transform, BWT, Shortest Common Superstring, SCS}
\end{abstract}

\section{Introduction}

Given a set of strings, the shortest common superstring is the shortest
string which contains each of the input strings as a substsring.
The problem is NP-hard \cite{gallant1980finding}, but efficient approximation algorithms exist.
Perhaps the most practical of the approximation algorithms is the greedy algorithm first analyzed by
 Tarhio, Ukkonen \cite{tarhio1988greedy} and Turner \cite{turner1989approximation}. The algorithm greedily joins together the pairs of strings
with the longest prefix-suffix overlap, until only one string remains. In case there are equally long overlaps,
the algorithm can make an arbitrary selection among those. The
remaining string is an approximation of the shortest common superstring. The algorithm has been proven to give a superstring
with length at most $3 \frac{1}{2}$ times the optimal length
\cite{kaplan2005greedy}. It was originally conjectured by
Ukkonen and Tarhio \cite{tarhio1988greedy}
that the greedy algorithm never outputs a superstring that is more 
than twice as long as the optimal, and the conjecture is still
open.

Let $m$ be the number of strings, $n$ be the sum of the lengths of all the strings, and $\sigma$ the size of the alphabet.
In 1990 Ukkonen showed how to implement the greedy algorithm in $O(n)$ time
and $O(n \log n)$ bits of space using the Aho-Corasick automaton
\cite{ukkonen1990linear}. Since
then, research on the problem has focused on finding algorithms with better provable approximation ratios (see e.g.
\cite{mucha2013lyndon} for a summary). 
Currently, algorithm with the best proven approximation ratio in peer reviewed literature 
is the one by Mucha with an approximation
ratio of $2 \frac{11}{23}$ \cite{mucha2013lyndon}, and there is a preprint claiming
an algorithm with a ratio of $2 \frac{11}{30}$ \cite{paluch2014better}.
However, we are not aware of any published algorithm that solves
the problem in better than $O(n \log n)$ bits of space.
Improving
the factor $\log n$ to $\log \sigma$ is important
in practice. Many of the largest data
sets available come from DNA strings which have
an alphabet of size only 4, while $n$ can be over $10^9$.

We present an algorithm that implements the greedy 
heuristic in $O(n \log \sigma)$ time and bits of space. 
It is based on the FM-index enhanced
with a succinct representation of the topology of the suffix
tree. The core of the algorithm is the iteration of
prefix-suffix overlaps of input strings in decreasing order
of length using a technique described in \cite{makinen2015genome} and \cite{simpson2010efficient}, 
combined with Ukkonen's bookkeeping \cite{ukkonen1990linear} to keep track of the paths formed in the
overlap graph of the input strings. The main technical novelty of this work is the implementation of Ukkonen's
bookkeeping in $O(n \log \sigma)$ space. We also have a working implementation of the algorithm
based on the SDSL-library \cite{gog2014theory}. For practical reasons the implementation differs
slightly from the algorithm presented in this paper, but the time and space usage should be
similar.

\section{Preliminaries}

Let there be $m$ strings $s_1, \ldots, s_m$ drawn from the alphabet $\Sigma$ of
size $\sigma$ such that
the sum of the lengths of the strings is $\sum_{i=1}^m |s_i| = n$. We build a single string
by concatenating the $m$ strings, placing a separator character 
\$ $\not\in \Sigma$ between each string. 
We define that the separator is lexicographically smaller than all characters
in $\Sigma$.
This gives us the string
$S = s_1 \$ s_2 \$ \cdots s_m \$$ of length $n + m$.
Observe that the set of suffixes that are prefixed by some substring $\alpha$
of $S$
are adjacent in the lexicographic ordering of the suffixes. We call this
interval in the sorted list of suffixes the \emph{lexicographic range} of string $\alpha$.
All occurrences of a substring $\alpha$ can be uniquely represented as a triple
$(a_\alpha, b_\alpha, d_\alpha)$, where $[a_\alpha, b_\alpha]$ is the
lexicographic range of $\alpha$, and $d_\alpha$ is the length of $\alpha$.
A string $\alpha$ is \emph{right maximal} in $S$
if and only if there exist two or more distinct characters 
$y,z \in \Sigma \cup \{\$\}$ such that
the strings $\alpha y$ and $\alpha z$ are substrings of $S$.
Our algorithm needs support for two operations on substrings:
left extensions and suffix links.
A left extension of string $\alpha$ with character 
$x$ is the map
$(a_\alpha, b_\alpha, d_\alpha) \mapsto (a_{x\alpha}, b_{x\alpha},
d_{x\alpha})$. A suffix link for the right-maximal string $x \alpha$
is the map $(a_{x\alpha}, b_{x\alpha}, d_{x\alpha})
\mapsto (a_\alpha, b_\alpha, d_\alpha)$.


\section{Overview of the Algorithm}
We use Ukkonen's 1990 algorithm \cite{ukkonen1990linear} as a basis
for our algorithm. Conceptually, we have a complete directed
graph where vertices
are the input strings, and the weight of the edge from string
$s_i$ to string $s_j$ is the length of the longest suffix of $s_i$ which
is also a prefix of $s_j$. If there is no such overlap, the weight
of the edge is zero. The algorithm finds a 
Hamiltonian path over the graph, and merges
the strings in the order given by the path to form the 
superstring. We define the merge
of strings $s_i = \alpha \beta$ and $s_j = \beta \gamma$, where 
$\beta$ is the longest prefix-suffix overlap of $s_i$ and $s_j$, as
the string $\alpha\beta\gamma$. 
It is known that the string formed by merging the strings in
the order given by the maximum weight Hamiltonian path gives a
superstring of optimal length \cite{tarhio1988greedy}. 
The greedy algorithm tries to heuristically find a 
Hamiltonian path with a large total length. 

Starting from a graph $G$
where the vertices are the input strings and there are no edges,
the algorithm iterates all prefix-suffix overlaps of pairs
of strings in decreasing order of length. For each pair $(s_i,s_j)$
we add an edge from $s_i$ to $s_j$ iff the in-degree of $s_j$ is
zero, the out-degree of $s_i$ is zero, and 
adding the edge would not create a cycle in $G$.
We also consider overlaps of length zero, so every possible edge is considered and
it is easy to see that in the end the added edges form a Hamiltonian 
path over $G$.
\section{Algorithm} \label{sec:algorithm}

Observe that if an input string is a proper substring of another input string, then any valid superstring that contains the longer string
also contains the shorter string, so we can always discard the
shorter string. Similarly if there are strings that occur multiple
times, it suffices to keep only one copy of each. This preprocessing
can be easily done in $O(n \log \sigma)$ time and space 
for example by backward searching all the input strings
using the FM-index.

After the preprocessing, we sort the input strings into lexicographic
order, concatenate them
placing dollar symbols in between the strings, and build
an index that supports suffix links and left extensions. The sorting
can be done with merge sort such that string comparisons are done
$O(\log(n))$ bits at a time using machine word level parallelism, as allowed 
by the RAM model. This works in $O(n \log \sigma)$ time and space if the sorting is
implemented so that it does not move the strings around, but instead manipulates only
pointers to the strings.

For notational convenience, from here on
$s_i$ refers to the string with lexicographic rank $i$ among the input strings.

We iterate in decreasing order of length 
all the suffixes of the input strings $s_i$ that occur at least twice in $S$
and for each check whether the suffix is also a prefix of some other string $s_j$,
and if so, we add an edge from $s_i$ to $s_j$ if possible.
To enumerate the prefix-suffix overlaps, we use the key ideas from the algorithm for reporting
all prefix-suffix overlaps to build an overlap graph described in \cite{makinen2015genome} and \cite{simpson2010efficient}, 
adapted to get the overlaps in decreasing order of length.

We maintain an iterator for each of the input strings.
An iterator for the string $s_i$ is a quadruple $(i,\ell,r,d)$, where $[\ell,r]$ 
is the lexicographic range of the 
current suffix $\alpha$ of $s_i$ and $d$ is the length of $\alpha$,
i.e. the depth of the iterator.
Suffixes of the input strings which are
not right maximal in the concatenation $S = s_1\$\ldots s_m\$$ can never be a prefix
of any of the input strings.
The reason is that if $\alpha$ is not right-maximal,
then $\alpha$ is always followed by 
the separator $\$$. This means that if $\alpha$
is also a prefix of some other string $s_j$, then $s_j = \alpha$, because
the only prefix of $s_j$ that is followed by a $\$$ is the whole string $s_j$. 
But then $s_j$ is a substring of $s_i$, which can not happen because 
all such strings were removed in the preprocessing stage.
Thus, we can safely disregard any suffix $\alpha$ of $s_i$ 
that is not right maximal
in $S$. Furthermore, if a suffix $\alpha$ of $s_i$ is not right maximal, then none
of the suffixes $\beta\alpha$ are right-maximal either, 
so we can disregard those, too.

We initialize the iterator for each string $s_i$ by backward searching
$s_i$ using the FM-index for as long as the current suffix of $s_i$ is right-maximal.
Next we sort these quadruples in the decreasing order of 
depth into an array $\texttt{iterators}$.
When this is done, we start iterating from the iterator
with the largest depth, i.e. the first element of $\texttt{iterators}$.
Suppose the current iterator
corresponds to string $i$, and the current suffix of string $s_i$ is $\alpha$. 
At each step of the iteration we check whether $\alpha$ is also a prefix of some 
string by executing a left extension with the separator character
$\$$. If the lexicographic range $[\ell',r']$ of $\$\alpha$ is non-empty,
we know that the suffixes of $S$ in the range $[\ell',r']$ start with a dollar
and are followed by a string that has $\alpha$ as a prefix. We conclude that
the input string with lexicographic rank $i$ among the input strings
has a suffix of length 
$d$ that matches a prefix of the strings with lexicographic ranks
$\ell' , \ldots , r'$ among the input strings. This is true because
the lexicographic order of the suffixes of $S$ that start with dollars
coincides with the lexicographic ranks of the strings following the
dollars in the concatenation, because the strings are concatenated
in lexicographic order.

Thus, according to the greedy heuristic, 
we should try to merge $s_i$ with a string from
the set $s_{\ell'},\ldots,s_{r'}$, which corresponds to
adding an edge from $s_i$ to some string from 
$s_{\ell'},\ldots,s_{r'}$ in the graph $G$. We
describe how we maintain the graph $G$ in a moment. After updating
the graph, we  update the current iterator by decreasing $d$ by 
one and taking a suffix link of the lexicographic range $[\ell, r]$. 
The iterator with the next largest $d$ can be found in constant time
because the array $\texttt{iterators}$ is initially sorted in
descending order of depth. We can maintain
a pointer to the iterator with the largest $d$. If at some step
$\texttt{iterators}[k]$ has the largest depth, then in the next
step either $\texttt{iterators}[k+1]$ or
$\texttt{iterators}[1]$ has the largest depth.
The pseudocode for the main iteration loop is shown in Algorithm
\ref{alg:main_iteration}.

\begin{algorithm}

\SetAlgoLined
\DontPrintSemicolon
$k \gets 1$ \;
\While{\upshape $\texttt{iterators}[k].d \geq 0$}{

	$(i,[\ell,r],d) \gets \texttt{iterators}[k]$ \;
	$[\ell', r'] \gets \texttt{leftextend}([\ell,r], \$)$ \;
	\If{$[l', r']$ is non empty}{
		$\texttt{trymerge}([l', r'], i)$
	}
	$\texttt{iterators}[k] \gets (i, \texttt{suffixlink}(\ell, r), d-1)$ \;
	\eIf{$i = m$ or \upshape $(\texttt{iterators}[1].d > \texttt{iterators}[i+1].d)$}{
		$k \gets 1$
	}{
	\textbf{else} $k \gets k +1 $ 
	}
}
\caption{Iterating all prefix-suffix overlaps
\label{alg:main_iteration}}
\end{algorithm}

%

{\noindent}Now we describe how we maintain the graph $G$.
The range $[\ell', r']$ now represents the
lexicographical
ranks of the input strings that are prefixed by $\alpha$
among all input strings. Each string $s_j$ in this
range is a candidate to merge to string $s_i$, but some
bookkeeping is needed to keep track of available strings.
We use
essentially the same method as Tarhio and
Ukkonen \cite{tarhio1988greedy}.
We have bit vectors $\texttt{leftavailable}[1..m]$ and
$\texttt{rightavailable}[1..m]$ such that
$\texttt{leftavailable}[k] = 1$ if and only if string 
$s_k$ is available
to use as the left side of a merge, and
$\texttt{rightavailable}[k] = 1$ if and only if string $s_k$
is available as the right side of a merge. Equivalently,
$\texttt{leftavailable}[k] = 1$ iff the out-degree of $s_k$ is
zero and $\texttt{rightavailable}[k] = 1$ if the in-degree of
$s_k$ is zero.
Also, to prevent the formation
of a cycle, we need arrays
$\texttt{leftend}[1..m]$, where $\texttt{leftend}[k]$ 
gives the leftmost string
of the chain of merged strings to the left of $s_k$, and
$\texttt{rightend}[1..m]$, where $\texttt{rightend}[k]$ gives the
rightmost string of the chain of merged strings to the right of $s_k$.
We initialize $\texttt{leftavailable}[k] =
\texttt{rightavailable}[k] = 1$ and 
$\texttt{leftend}[k] = \texttt{rightend}[k] = k$ for all $k = 1, \ldots, m$.

When we get the interval $[\ell', r']$ such that
$\texttt{leftavailable}[j] = 1$,
we try to find an index
$j \in [\ell_{\$ \alpha}, r_{\$ \alpha}]$ such that $\texttt{rightavailable}[i] = 1$ and 
$\texttt{leftend}[j] \neq i$. Luckily we only need to examine at most
two indices $j$ and $j'$ such that 
$\texttt{rightavailable}[j] = 1$ and $\texttt{rightavailable}[j'] = 1$ because
if $\texttt{leftend}[j] = i$, then $\texttt{leftend}[j'] \neq i$, and vice versa. This procedure
is named $\texttt{trymerge}([l', r'], i)$ in Algorithm \ref{alg:main_iteration}.

The problem is now to find up to two ones in the bit vector
$\texttt{rightavailable}$ in the interval of indices 
$[\ell_{\$ \alpha}, r_{\$ \alpha}]$. To do this efficiently,
we maintain for each index $k$ in
$\texttt{rightavailable}$ the index
of the first one in $\texttt{rightavailable}[k+1..m]$, denoted
with $\texttt{next\_one}(k)$. 
If there are two ones in the interval 
$[\ell_{\$ \alpha}, r_{\$ \alpha}]$, then they can be found at
 $\texttt{next\_one}(\ell_{\$ \alpha}-1)$ and
$\texttt{next\_one}(\texttt{next\_one}(\ell_{\$ \alpha}-1))$.
The question now becomes, how do we maintain this information
efficiently?
In general, this is the problem of indexing a bit vector for
dynamic successor queries, for which there does not exist a constant
time solution using $O(n \log \sigma)$ space in the literature. 
However, in our case the vector
$\texttt{rightavailable}$ starts out filled with ones, and
once a one is changed to a zero, it will not change back for
the duration of the algorithm,
which allows us to have a simpler and 
more efficient data structure. 

Initially, $\texttt{next\_one}(k) = k+1$ for all $k < m$.
The last index does not have a successor, but it can
easily be handled as a special case. For clarity and brevity
we describe the rest of the process as if the special
case did not exist. When we update
$\texttt{rightavailable}(k) := \texttt0$, then we need to also
update $\texttt{next\_one}[k'] := \texttt{next\_one}(k)$
for all $k' < k$ such that $\texttt{rightavailable}[k'+1..k]$ contains
only zeros.
To do this efficiently, we store the value of
$\texttt{next\_one}$ only once for each sequence of
consecutive zeros in $\texttt{rightavailable}$, which allows
us to update the whole range at once. To keep track of
the sequences of consecutive zeros, we can use a union-find
data structure. A union-find data structure maintains
a partitioning of a set of elements into disjoint groups.
It supports the operations $\texttt{find}(x)$, which returns the
representative of the group containing $x$, 
and $\texttt{union}(x,y)$, which takes two representatives
and merges the groups containing them. 

We initialize the union-find structure such that
there is an element for every index 
in $\texttt{rightavailable}$, and we also initialize an
array $\texttt{next}[1..m]$ such that
$\texttt{next}[k] := k+1$ for all $k = 1, \ldots m$.
When a value at index $k$ is changed to a zero, we compute
$q := \texttt{next}[\texttt{find}(k)]$. Then we will
do $\texttt{union}(\texttt{find}(k), \texttt{find}(k-1))$
and if
$\texttt{rightavailable}[k+1] = 0$, we will do
$\texttt{union}(\texttt{find}(k), \texttt{find}(k+1))$.
Finally, we update
$\texttt{next}[\texttt{find}(k)] = q$. We can answer
queries for $\texttt{next\_one}(k)$ with 
$\texttt{next}[\texttt{find}(k)]$.



Whenever we find a pair of indices $i$ and $j$ such that 
$\texttt{leftavailable}[i] = 1$,
$\texttt{rightavailable}[j] = 1$ and 
$\texttt{leftend}[j] \neq i$, we add an edge from $s_i$ to $s_j$
by recording string $j$ as the successor
of string $i$ using arrays $\texttt{successor}[1..m]$ and
$\texttt{overlaplength}[1..m]$. We set $\texttt{successor}[j] = i$
and $\texttt{overlaplength}[j] = d_i$, where $d_i$ is the length
of the overlap of $s_i$ and $s_j$, and do the updates:
\begin{align*}
\texttt{leftavailable}[i] &:= 0 \\
\texttt{rightavailable}[j] &:= 0 \\
\texttt{leftend}[\texttt{rightend}[j]] &:= \texttt{leftend}[i] \\
\texttt{rightend}[\texttt{leftend}[i]] &:= \texttt{rightend}[j] \\
\end{align*}
Note that the arrays  $\texttt{leftend}$ and $\texttt{rightend}$ are only up to date for
the end points of the paths, but this is fine for the algorithm.
Finally we update the $\texttt{next}$ array with the union-find
structure using the process described earlier.
We stop iterating when we have done $m-1$ merges.
At the end, we have a Hamiltonian path over $G$, and we
form a superstring by merging the strings in the order
specified by the path.

\section{Time and Space Analysis} \label{sec:timespace}

The following space analysis is in terms of number of bits used.
We assume that the strings are binary encoded
such that each character takes $\lceil \log_2 \sigma \rceil$ bits.
A crucial observation is that we can afford to store a
constant number of $O(\log n)$ bit machine words for each distinct input string.

\begin{lemma} \label{lemma:mlogn}
Let there be $m$ \textbf{distinct} non-empty strings with combined length $n$ from 
an alphabet of size $\sigma > 1$. Then  $m \log n \in O(n \log \sigma)$.
\end{lemma}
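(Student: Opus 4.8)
The plan is to exploit the \emph{distinctness} hypothesis, which is the only thing preventing $m$ from being huge relative to $n$. Writing the number of characters as $N = n/\lceil \log_2 \sigma\rceil$ and setting $b=\lceil\log_2\sigma\rceil$ (so that $n=bN$ and $b=\Theta(\log\sigma)$), the bound is essentially trivial when the strings are long and $m$ is small, since then $\log n$ grows only logarithmically while $n\log\sigma$ grows linearly in $N$. The real content lies in the opposite regime of many short strings. I would first prove a counting lower bound of the form $N=\Omega(m\log_\sigma m)$, and then bound $\log n$ piece by piece against it.

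For the counting step, the key observation is that the number of \emph{distinct} nonempty strings of length at most $\ell$ over an alphabet of size $\sigma$ is $\sum_{k=1}^{\ell}\sigma^k<\sigma^{\ell+1}$ for $\sigma\ge 2$. Choosing $\ell\approx\log_\sigma m$ so that $\sigma^{\ell+1}\le m/2$, at most half of the $m$ strings can have length $\le\ell$, so at least $m/2$ of them have length $\Omega(\log_\sigma m)$. Summing the lengths gives $N\ge \tfrac{m}{2}\cdot\Omega(\log_\sigma m)$, i.e.\ $m\log m = O(N\log\sigma)$; the degenerate case in which $\log_\sigma m$ is only a small constant is absorbed by the trivial bound $N\ge m$, which holds because every string is nonempty.

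The second step writes $\log n=\log b+\log N$ and bounds $m\log n=m\log b+m\log N$ term by term. For the dominant term $m\log N$ I would split on the size of $N$: if $N\le m^2$ then $\log N\le 2\log m$, and the counting bound of Step~1 gives $m\log N=O(N\log\sigma)$ at once; if $N>m^2$ then $m<\sqrt N$, so $m\log N=O(\sqrt N\,\log N)=O(N)\subseteq O(N\log\sigma)$ since $\log\sigma\ge 1$. For the low-order term, $\log b=\log\lceil\log_2\sigma\rceil\le\log\sigma$ combined with $m\le N$ yields $m\log b\le N\log\sigma$. Adding the two estimates gives $m\log n=O(N\log\sigma)=O(n\log\sigma)$, where the last step uses $N\le n$.

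I expect the main obstacle to be the large-$m$ regime, where the counting lower bound of Step~1 does all the essential work: without distinctness the statement is simply false, as $m$ copies of a single character have $N=m$, so the claim would demand $\log n=O((\log\sigma)^2)$, which fails once $m$ is large for fixed $\sigma$. The only mildly delicate point is controlling $\log N$, and this is handled cleanly by the $N\le m^2$ versus $N>m^2$ case split; all multiplicative constants are immaterial for the $O(\cdot)$ conclusion.
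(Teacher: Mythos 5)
Your proof is correct and follows essentially the same strategy as the paper: a case split on whether $m$ is small (at most the square root of the total length, making the bound trivial) versus large (where $\log n = O(\log m)$), combined with a pigeonhole/counting argument that distinctness forces at least half the strings to have length $\Omega(\log_\sigma m)$, so the total length is $\Omega(m \log m)$ bits. The only cosmetic difference is that you count distinct strings over the $\sigma$-ary alphabet and track the character-versus-bit conversion explicitly, whereas the paper counts distinct binary strings directly; both yield the same bound.
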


\noindent Proof. Suppose $m \leq \sqrt n $. Then the Lemma is
clearly true, because:
$$m \log n \leq \sqrt n \log n \in O(n \log \sigma)$$ 
We now consider the remaining case $m \geq \sqrt n$, 
or equivalently $\log n \leq 2 \log m$. This means
$m \log n \leq 2 m \log m$, so it suffices to show
$m \log m \in O(n \log \sigma)$.

First, note that at least half of the strings have length at least
$\log (m) - 1$ bits. This is trivially true when $\log (m) -1 \leq 1$. When
$\log (m) - 1 \geq 2$, the number of distinct binary strings of length at most
$\log(m)-2$ bits is 
$$\sum_{i = 1}^{\lfloor \log(m)-2 \rfloor} 2^i
\leq 2^{\log (m) - 1} = \frac{1}{2} m
$$
Therefore indeed at least half of the strings have length of at least
$\log m - 1$ bits. The total length of the strings
is then at least $\frac{1}{2}m(\log m  - 1)$ bits.
Since the binary representation of all strings combined takes
$n \lceil \log_2 \sigma \rceil$ bits, we have 
$n \lceil \log_2 \sigma \rceil \geq \frac{1}{2}m(\log m  - 1)$, which implies
$m \log m \leq 2 n \lceil \log_2 \sigma \rceil + 1 \in O(n \log \sigma). \, \qed$

Next, we describe how to implement the suffix links and left extensions.
We will need to build the following data structures
for the concatenation of all input strings separated by
a separator character: 
\begin{itemize}
\item The Burrows-Wheeler transform, represented
as a wavelet tree with support for rank and select queries.
\item The $C$-array, which has length equal to the number of
characters in the concatenation, such that
$C[i]$ is the number of occurrences of characters with
lexicographic rank strictly less than $i$.
\item The balanced parenthesis representation of the suffix
tree topology with support for queries for leftmost leaf,
rightmost leaf and lowest common ancestor.
\end{itemize}
Note that in the concatenation of the strings, 
the alphabet size is increased by one
because of the added separator character, and the total length
of the data in characters is increased by $m$. However
this does not affect the asymptotic size of the data, because
$$(n+m) \log (\sigma + 1) \leq 2n (\log \sigma + 1)
\in \Theta (n \log \sigma)
$$
The three data structures can be built and represented
in $O(n \log \sigma)$ time and space \cite{belazzougui2014linear}.
Using these data structures we can implement the left extension
for lexicographic interval $[\ell, r]$ with the character $c$ by:
$$([\ell, r], c) \mapsto [C[c] 
+ \texttt{rank}_{BWT}(\ell,c), C[c] + \texttt{rank}_{BWT}(r,c)] $$
We can implement the suffix link for the right maximal 
string $c \alpha$ with the 
lexicographic interval $[\ell, r]$ by first computing
$$v = \texttt{lca}(\texttt{select}_{BWT} (c, \ell - C[c]),
\texttt{select}_{BWT} (c, r - C[c]))$$
and then
$$ 
[\ell, r] \mapsto [\texttt{leftmostleaf}(v), \texttt{rightmostleaf}(v)]
$$
This suffix link operation works as required for right-maximal strings by removing
the first character of the string, but
the behaviour on non-right-maximal strings is slightly different.
The lexicographic range of a non-right-maximal string is the same as the
lexicographic range of the shortest right-maximal string that has it as a prefix.
In other words, for a non-right-maximal string $c \alpha$, the operation maps the 
interval $[\ell_{c \alpha}, r_{c \alpha}]$ to the lexicographic interval
of the string $\alpha \beta$, where $\beta$ is the shortest right-extension
that makes $c \alpha \beta$ right-maximal. This behaviour allows
us to check the right-maximality of a substring $c \alpha$
given the lexicographic ranges $[\ell_\alpha, r_\alpha]$ and 
$[\ell_{c \alpha}, r_{c \alpha}]$ in the iterator initialization phase
of the algorithm as follows:

\begin{lemma}The substring
$c \alpha$ is right maximal if and only if the suffix link of 
$[\ell_{c \alpha}, r_{c \alpha}]$ is $[\ell_\alpha, r_\alpha]$.
\end{lemma}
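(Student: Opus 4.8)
The plan is to prove the two directions separately, in both cases leaning on the description of the suffix-link operation given immediately above the statement: on a right-maximal argument it strips the leading character, while on a non-right-maximal argument $c\alpha$ it returns the lexicographic range of $\alpha\beta$, where $\beta$ is the shortest nonempty right extension making $c\alpha\beta$ right maximal. Throughout, $[\ell_\alpha, r_\alpha]$ denotes the lexicographic range of the string $\alpha$, which in the iterator-initialization context is itself right maximal; I will make this hypothesis explicit at the one point where it is needed.

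For the forward direction, suppose $c\alpha$ is right maximal. First I would record the standard fact that right maximality is inherited by the suffix obtained by deleting the first character: if $c\alpha y$ and $c\alpha z$ are substrings of $S$ for distinct $y, z \in \Sigma \cup \{\$\}$, then $\alpha y$ and $\alpha z$ are substrings as well, so $\alpha$ is right maximal and $[\ell_\alpha, r_\alpha]$ is exactly the leaf range of its suffix-tree node. Since $c\alpha$ is right maximal, the operation behaves as the genuine suffix link and removes the leading $c$, mapping $[\ell_{c\alpha}, r_{c\alpha}]$ to the range of $\alpha$, that is, to $[\ell_\alpha, r_\alpha]$. This direction is therefore immediate from the stated behaviour.

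For the converse I would argue by contraposition: assuming $c\alpha$ is \emph{not} right maximal, I must show the operation does not return $[\ell_\alpha, r_\alpha]$. Because $c\alpha$ is not right maximal it is always followed by a single character, so the shortest right-maximal extension $c\alpha\beta$ has $\beta \neq \varepsilon$, and by the stated behaviour the operation returns the range of $\alpha\beta$. Here $\alpha$ is a proper prefix of $\alpha\beta$, so the node $v_{\alpha\beta}$ lies strictly below the node $v_\alpha$ in the suffix tree; since distinct suffix-tree nodes have nested-or-disjoint and hence distinct leaf ranges, the returned interval is a proper sub-interval of $[\ell_\alpha, r_\alpha]$ and in particular differs from it.

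The step I expect to be the main obstacle is exactly this strictness claim in the converse. The inclusion of the range of $\alpha\beta$ inside that of $\alpha$ is automatic, but strictness can fail if $\alpha$ is itself not right maximal: if every occurrence of $\alpha$ is already followed by the whole of $\beta$, the two ranges coincide and the ``only if'' part of the statement breaks. The argument therefore genuinely uses that $\alpha$ is right maximal, so that $[\ell_\alpha, r_\alpha]$ is the leaf range of an actual branching node $v_\alpha$ of which $v_{\alpha\beta}$ is a proper descendant; this is precisely the situation guaranteed during iterator initialization, where $\alpha$ is the current right-maximal suffix and $c$ is the character being prepended. I would state this hypothesis explicitly, after which the strict containment, and hence the inequality of the two intervals, follows purely from the tree structure.
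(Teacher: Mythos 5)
Your proof is correct and follows essentially the same route as the paper's: both directions are read off from the stated behaviour of the suffix-link operation, with the forward direction immediate ($\beta$ empty) and the converse reduced to showing that the intervals of $\alpha$ and $\alpha\beta$ differ when $\beta$ is nonempty. You are in fact more careful than the paper on one point: the paper simply asserts that $\alpha$ and $\alpha\beta$ are distinct right-maximal strings and hence have distinct intervals, silently using that $\alpha$ is right maximal, whereas you correctly flag this as a hypothesis that is needed for the ``only if'' direction and that is guaranteed in the iterator-initialization context where the lemma is applied.
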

\begin{proof}
As discussed above, the suffix link of $[\ell_{c \alpha}, r_{c \alpha}]$ 
maps to the lexicographic interval of the string 
$\alpha \beta$ where $\beta$ is the shortest right-extension
that makes $c \alpha \beta$ right-maximal.
Suppose first that $c \alpha$ is right-maximal. Then
$[\ell_{\alpha\beta}, r_{\alpha\beta}] = [\ell_\alpha, r_\alpha]$, because
$\beta$ is an empty string. Suppose on the contrary that $c \alpha$
is not right-maximal. Then 
$[\ell_{\alpha\beta}, r_{\alpha\beta}] \neq [\ell_\alpha, r_\alpha]$, because
$\alpha\beta$ and $\alpha$ are distinct right-maximal strings. $\square$
\end{proof}
Now we are ready to prove the time and space complexity of the whole algorithm.
\begin{theorem}
The algorithm in Section \ref{sec:algorithm} can
be implemented in $O(n \log \sigma)$ time
and $O(n \log \sigma)$ bits of space.
\end{theorem}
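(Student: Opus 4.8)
The plan is to bound the space and the time separately, in each case proceeding structure by structure and phase by phase, and to reduce everything to two facts already in hand: the three index structures occupy $O(n \log \sigma)$ bits and are built in $O(n \log \sigma)$ time by the cited construction, and any array indexed by the $m$ distinct strings that stores a constant number of $O(\log n)$-bit words per entry occupies $O(m \log n)$ bits, which is $O(n \log \sigma)$ by Lemma \ref{lemma:mlogn}.

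For the space bound I would simply enumerate the state kept by the algorithm. The BWT wavelet tree, the $C$-array and the balanced-parenthesis suffix-tree topology together take $O(n \log \sigma)$ bits. The $m$ iterator quadruples, the bit vectors $\texttt{leftavailable}$ and $\texttt{rightavailable}$, the arrays $\texttt{leftend}$, $\texttt{rightend}$, $\texttt{successor}$, $\texttt{overlaplength}$ and $\texttt{next}$, and the union-find forest each store $O(1)$ machine words per distinct string, hence $O(m \log n)$ bits; Lemma \ref{lemma:mlogn} collapses each of these to $O(n \log \sigma)$, and summing the constantly many structures leaves the total at $O(n \log \sigma)$ bits.

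For the time bound I would go phase by phase. Preprocessing, lexicographic sorting of the strings and construction of the three indices are each $O(n \log \sigma)$, the first two as argued in Section \ref{sec:algorithm} and the last by the cited linear-time construction. Iterator initialisation runs, for each $s_i$, a backward search continued only while the current suffix stays right maximal; every step is one left extension plus one right-maximality test (by the preceding lemma), both $O(\log \sigma)$, and there are at most $|s_i|$ steps, so the total is $O(n \log \sigma)$. Sorting the $m$ quadruples by depth costs $O(m \log m) \subseteq O(m \log n) = O(n \log \sigma)$. For the main loop of Algorithm \ref{alg:main_iteration} I would first bound the number of iterations: each iteration lowers the depth of exactly one iterator by one, and the iterator for $s_i$ is touched only at depths $d_i, d_i - 1, \ldots, 0$, writing $d_i$ for its initial depth; since $\sum_i d_i \le \sum_i |s_i| = n$, the iteration count is at most $\sum_i (d_i + 1) \le n + m = O(n)$. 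Each iteration performs one left extension and one suffix link, both $O(\log \sigma)$ via rank, select, \texttt{lca} and leftmost/rightmost-leaf queries, contributing $O(n \log \sigma)$ in total, together with $O(1)$ array writes and $O(1)$ union-find calls.

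The step that I expect to need the most care is charging the union-find. Over the whole run there are $O(n)$ \texttt{find} calls, at most two per iteration inside $\texttt{trymerge}$, and $O(m)$ \texttt{union} calls, at most two for each of the $m-1$ merges; a generic union-find would therefore contribute $O(n\,\alpha(m))$, which is not visibly $O(n \log \sigma)$ once $\sigma$ is a small constant and $\log \sigma = O(1)$. The way out is that a \texttt{union} only ever joins a freshly created run of zeros to its immediate neighbours, so the union tree is the static path $1, 2, \ldots, m$, fixed in advance; for this restricted form the set-union operations can be supported in $O(1)$ amortised time, which brings the union-find contribution down to $O(n)$ and leaves the running time at $O(n \log \sigma)$. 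Adding up the phases yields the claimed $O(n \log \sigma)$ time and bits of space.
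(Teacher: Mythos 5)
Your proposal is correct and follows essentially the same route as the paper: bound the per-string bookkeeping by $O(m\log n)$ and collapse it via Lemma \ref{lemma:mlogn}, charge $O(\log\sigma)$ per left extension and suffix link over $O(n)$ operations, and invoke the restricted (consecutive-interval) union-find of Gabow and Tarjan to avoid the inverse-Ackermann factor — the one subtlety the paper also singles out. Your accounting of the iteration count and of the initialization/sorting phases is somewhat more explicit than the paper's, but the argument is the same.
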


\begin{proof} The preprocessing to remove contained and duplicate strings
can be done in $O(n \log \sigma)$ time and space 
for example by building
an FM-index, and backward searching all input strings.

The algorithm executes
$O(n)$ left extensions and suffix links.
The time to take a suffix link is dominated by the time
do the select query, which is $O(\log \sigma)$, and 
the time to do a left extension is dominated by the time to
do a rank-query which is also $O(\log \sigma)$.
For each left extension the algorithm does, it has to access and modify
the union-find structure. Normally this would take amortized time
related to the inverse function of the Ackermann function
\cite{thomas2009introduction}, but in our case the amortized complexity
of the union-find operations can be made linear using the construction
of Gabow and Tarjan \cite{gabow1985linear}, because we know
that only elements corresponding to consecutive positions in the
array $\texttt{rightavailable}$ will be joined together. Therefore,
the time to do all left extensions, suffix links and updates to the
union-find data structure is $O(n \log \sigma)$.

Let us now turn to consider the space complexity. For each
input string, we have the quadruple $(i,\ell,r,d)$
of positive integers with value at most $n$. The quadruples
take space $3 m \log m + m \log n$. The
union-find structure of Gabow and Tarjan can be implemented in $O(m \log m)$ bits
of space \cite{gabow1985linear}. The bit vectors $\texttt{leftavailable}$
and $\texttt{rightavailable}$
take exactly 2$m$ bits, and the arrays $\texttt{successor}$, $\texttt{leftend}$,
$\texttt{rightend}$ and $\texttt{next}$ take $m \log m$ bits each. 
The array $\texttt{overlaplength}$ takes $m \log n$ bits of space. 
Summing up, in addition
to the data structures for the left extensions
and contractions, we have only $O(m \log n)$ bits of space, which
is $O(n \log \sigma)$ by Lemma \ref{lemma:mlogn}. $\qed$ \end{proof} 

\begin{figure}[htb!]
\makebox[\columnwidth]{\includegraphics[width=0.8\columnwidth]{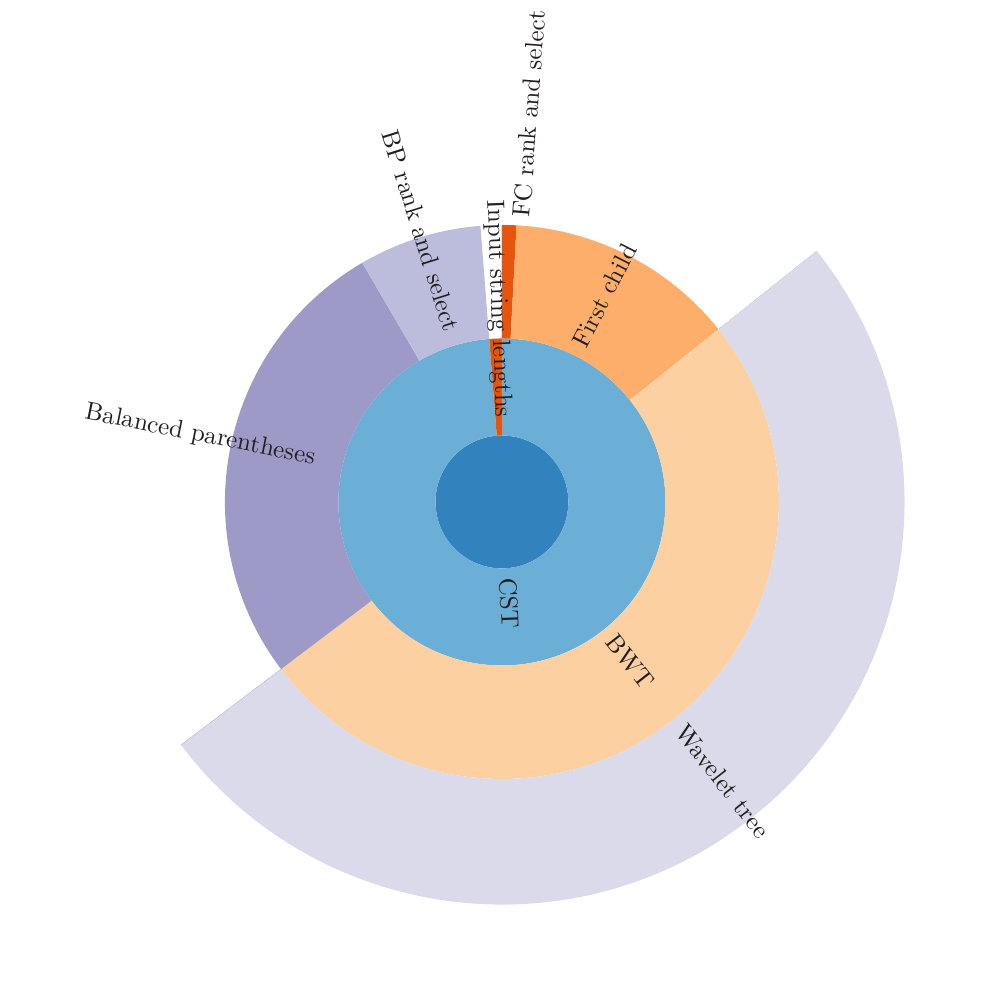}}
\caption{Memory breakdown of the data structures used by our implementation.
The plot was generated using the SDSL library.
Each sector angle represents the portion of the memory taken by the data structure
of the total memory of the inner data structure; areas have no special meaning.
Abbreviations: 
CST = compressed suffix tree, BWT = Burrows-Wheeler Transform,
BP = balanced parenthesis, FC = first child.
}
\label{fig:memory_pie}
\end{figure}

\begin{figure}[htb!]
	\sbox\figurebox{%
		\resizebox{\dimexpr.95\textwidth-1em}{!}{%
			\includegraphics[height = 3cm]{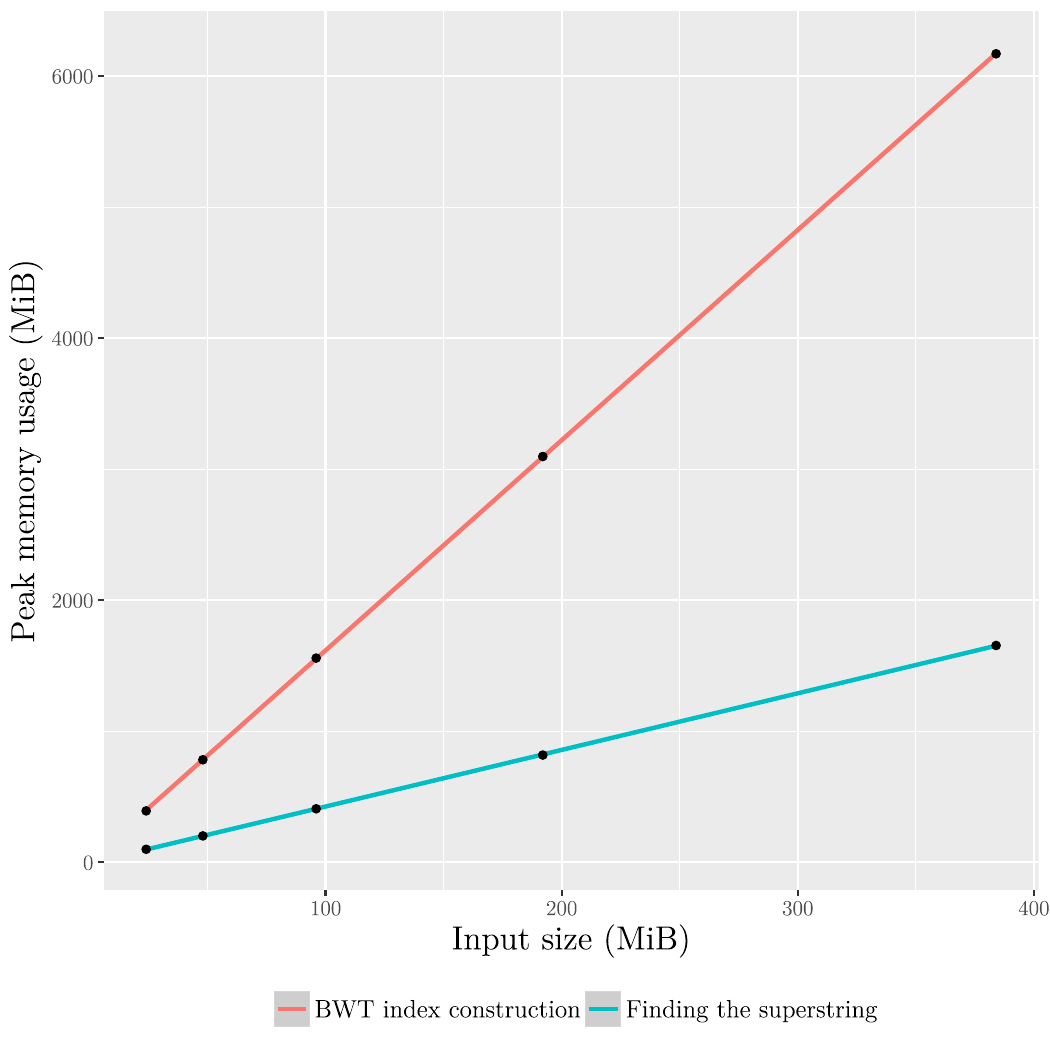}%
			\includegraphics[height = 3cm]{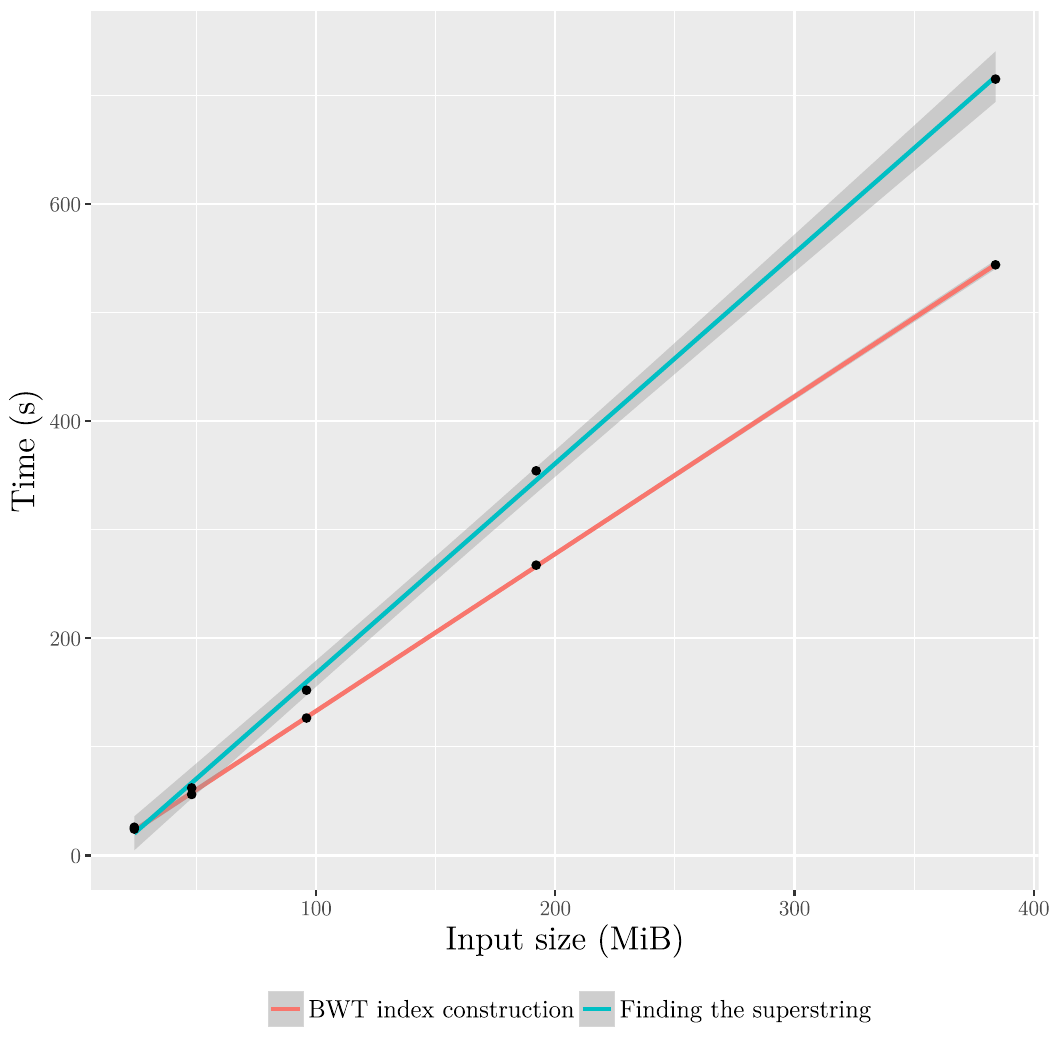}%
		}%
	}
	\setlength{\figureboxheight}{\ht\figurebox}

	\centering
	\subcaptionbox{Peak memory}{%
		\includegraphics[height = \figureboxheight]{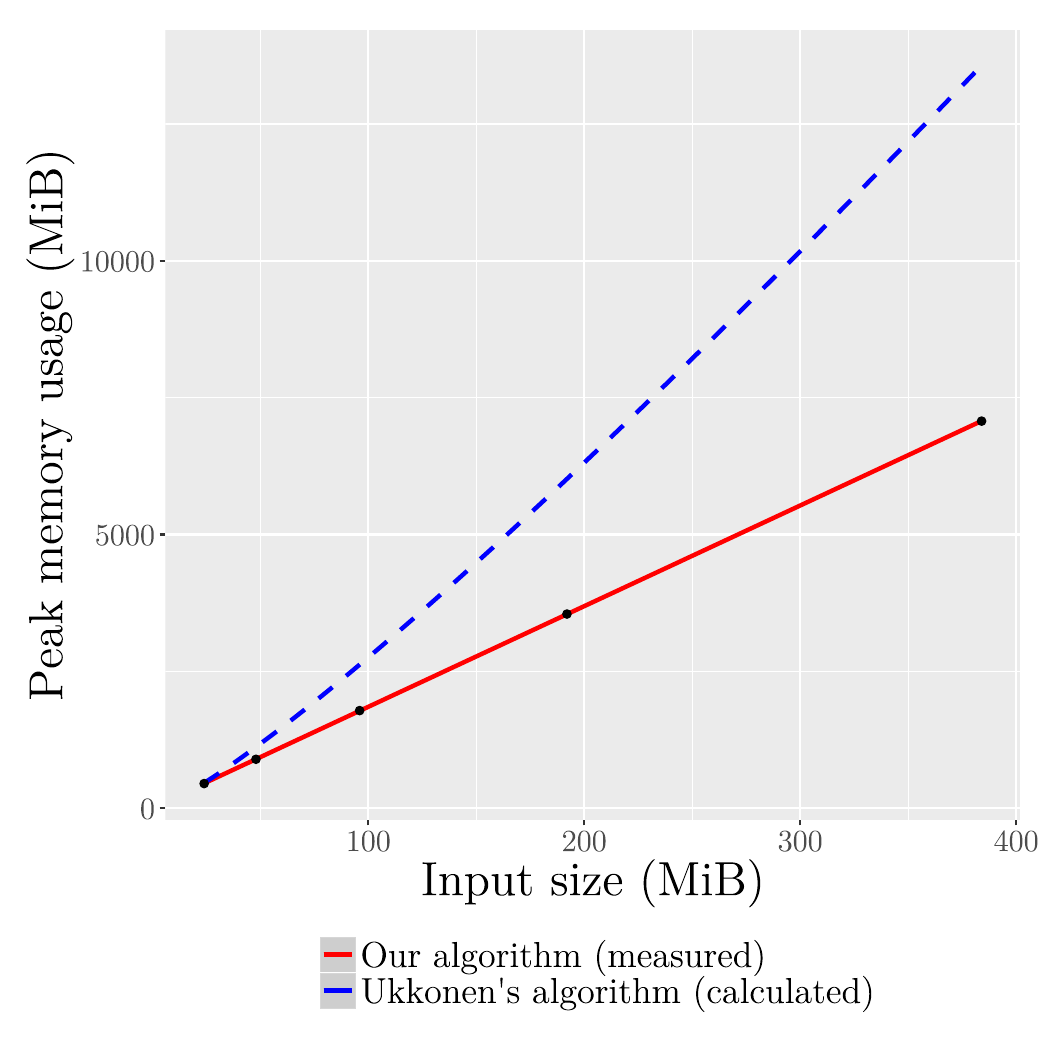}%
	}\quad
	\subcaptionbox{Time consumption}{
		\includegraphics[height = \figureboxheight]{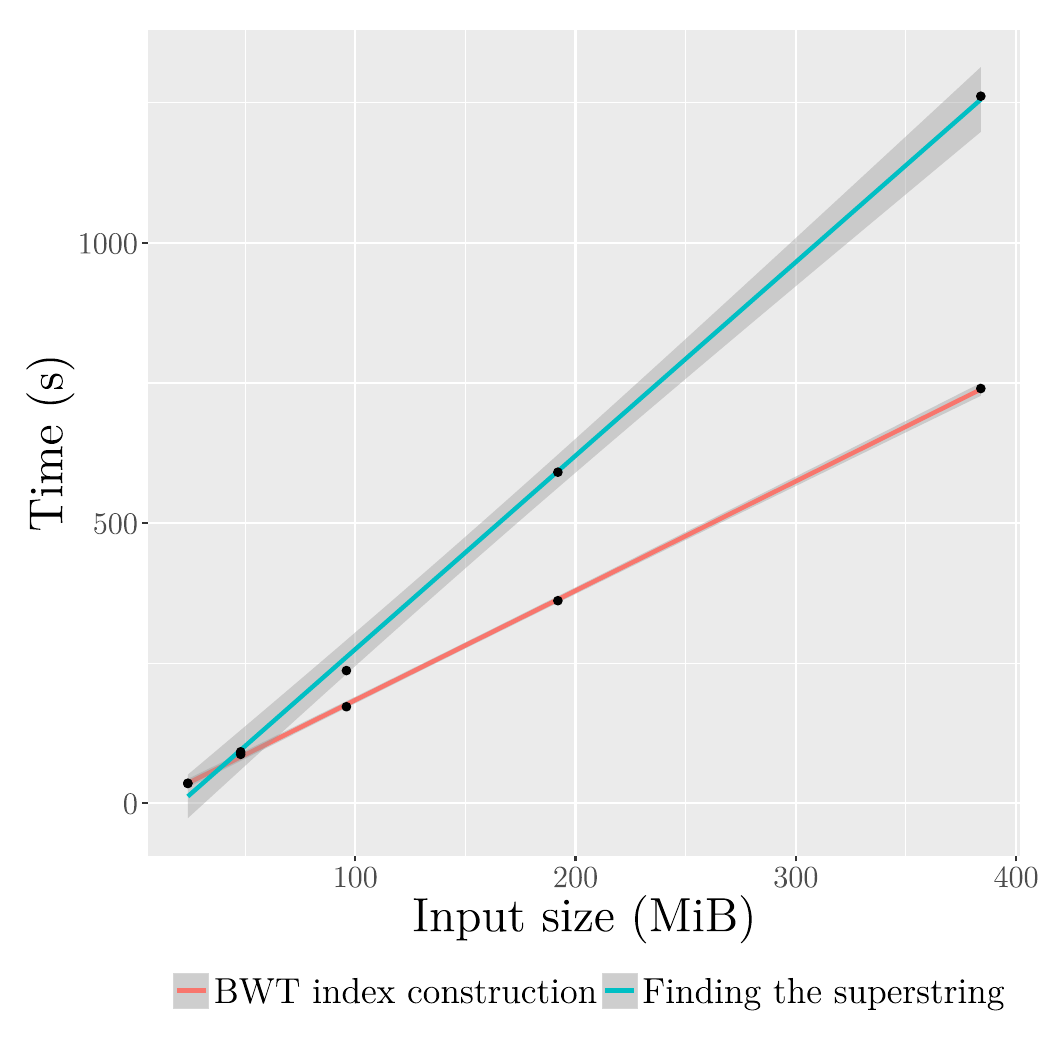}%
	}
	\caption{(a) The peak memory usage of our algorithm plotted against a conservative estimate of $4 n \log n$ bits of space needed by Ukkonen's Aho-Corasick based method. (b) the time usage of our algorithm for the two phases of the algorithm. The data points have been
	fitted with a least-squares linear model, and the grey band
	shows the 95\% confidence interval (large enough to be visible only for the second phase). The time and memory usage were measured using the /usr/bin/time command and the RSS value.}
	\label{fig:time_space}
\end{figure}

\begin{figure}[htb!]
	\sbox\figurebox{%
		\resizebox{\dimexpr.95\textwidth-1em}{!}{%
			\includegraphics[height = 3cm]{results/human_gut_sizes}%
			\includegraphics[height = 3cm]{results/human_gut_times}%
		}%
	}
	\setlength{\figureboxheight}{\ht\figurebox}

	\centering
	\subcaptionbox{Index construction}{%
		\includegraphics[height = \figureboxheight]{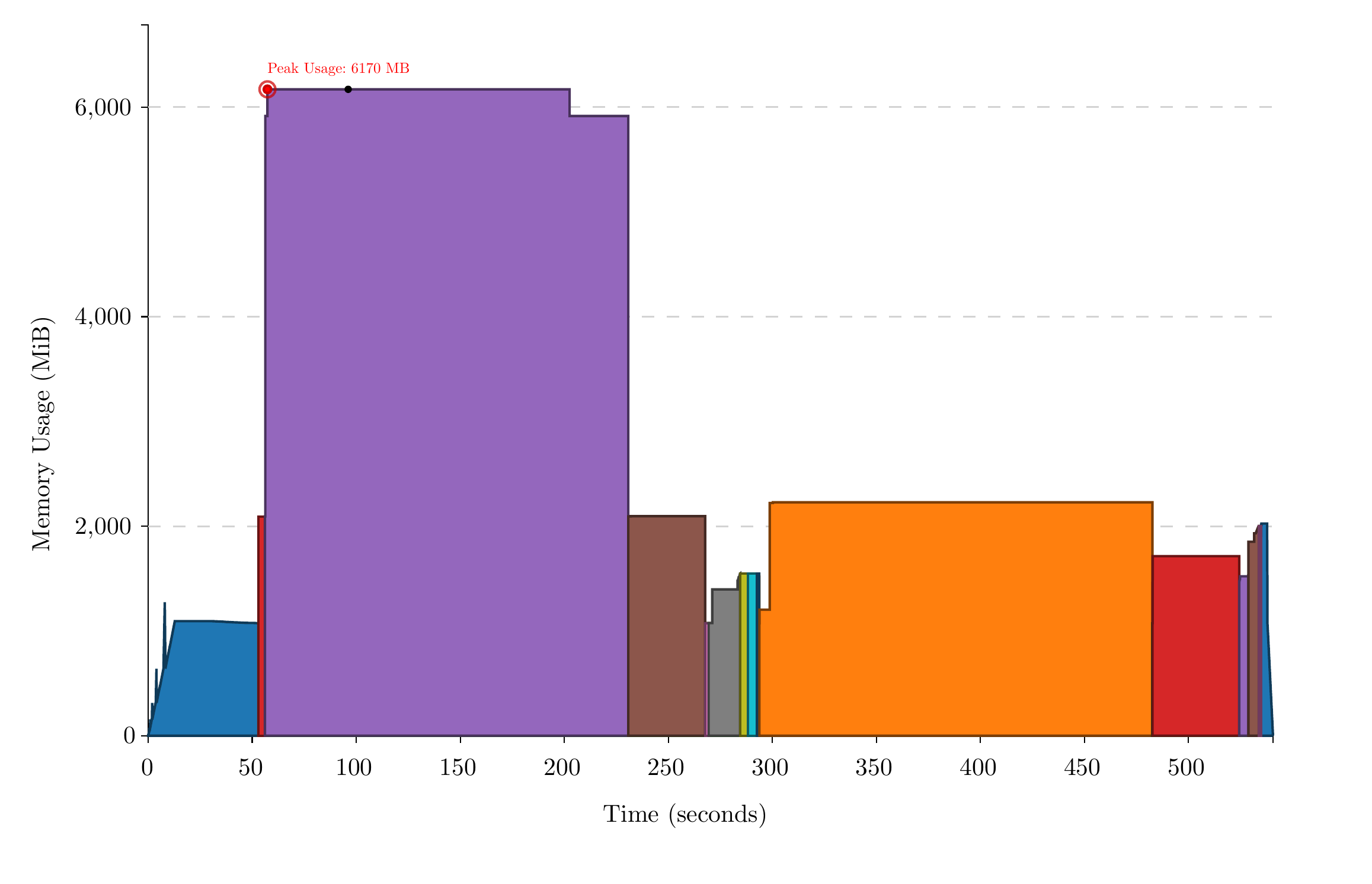}%
	}\quad
	\subcaptionbox{Superstring construction}{
		\includegraphics[height = \figureboxheight]{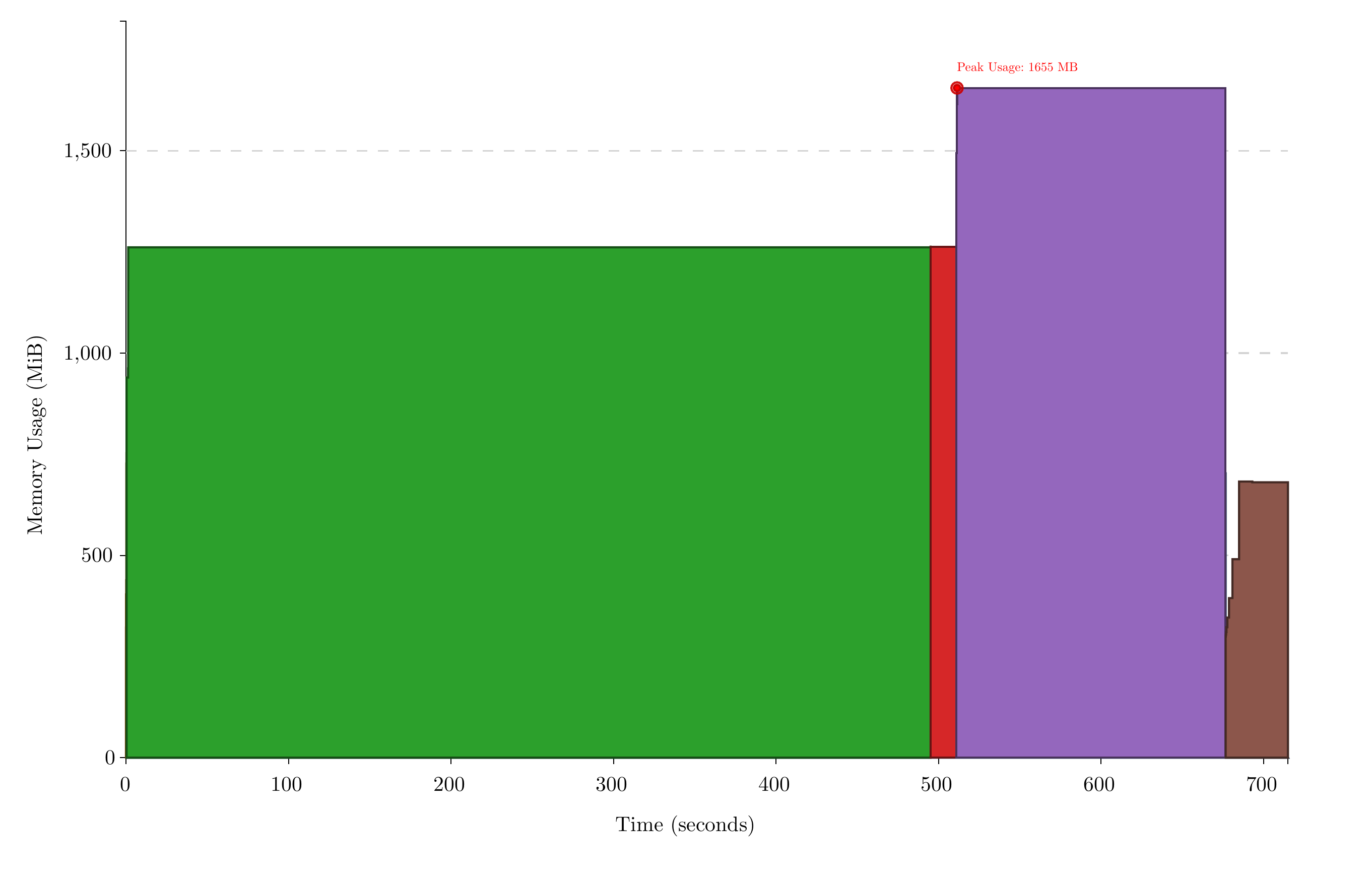}%
	}
	\caption{Subfigures (a) and (b) show the memory usage
	as a function of time for index construction and
	superstring construction, respectively. The peak in
	Figure (a) occurs during suffix array construction,
	and the peak in Figure (b) occurs during the iteration
	of prefix-suffix overlaps.}
	\label{fig:space_as_function_of_time}
\end{figure}

\section{Implementation}

The algorithm was implemented with the SDSL library \cite{gog2014theory}.
A compressed suffix tree that represents nodes as lexicographic intervals \cite{ohlebusch2010cst++} was 
used to implement the suffix links and left extensions.
Only the required parts of the 
suffix tree were built: the FM-index, balanced parentheses support and a bit vector that indicates the leftmost child node of each node. These data structures differ slightly from the description in Section
\ref{sec:timespace}, because they
were chosen for convenience as they were readily available 
in the SDSL library, and they should give very similar performance compared
to those used in the aforementioned Section. In particular,
the leftmost
child vector was needed to support suffix links, but we could manage
without it by using the operations on the balanced parenthesis support
described in Section \ref{sec:timespace}. Our implementation is
available at the URL 
$\texttt{https://github.com/tsnorri/compact-superstring}$

The input strings are first sorted with quicksort.
This introduces a $\log n$ factor to the time complexity,
but it is fast in practice.
The implementation then runs in two passes. 
First, exact duplicate strings are removed 
and the stripped compact suffix tree is built from the remaining strings. 
The main algorithm is 
implemented in the second part. The previously built stripped 
suffix tree is loaded into memory 
and is used to find the longest right-maximal suffix of each string
and to iterate the prefix-suffix overlaps.
Simultaneously, strings that are 
substrings of other strings are marked for exclusion from building the superstring. 


For testing, we took a metagenomic DNA sample from a human gut microbial
gene catalogue project \cite{qin2010human}, and sampled DNA fragments
to create five datasets with $2^{26 + i}$ characters respectively
for $i = 0,\ldots,4$. The alphabet of the sample was $\{A,C,G,T,N\}$.
Time and space usage for all generated datasets for both the index construction
phase and the superstring construction phase are plotted in Figure
\ref{fig:time_space}.
The machine used run Ubuntu Linux version 16.04.2 and
has 1.5 TB of RAM and four Intel Xeon CPU E7-4830 v3 processors (48 total cores, 2.10 GHz each). A breakdown
of the memory needed for the largest dataset for the different 
structures comprising the index is shown in Figure \ref{fig:memory_pie}. 

While we don't have an implementation of Ukkonen's greedy
superstring algorithm, have a conservative estimate
for how much space it would take. 
The algorithm needs at least the goto- and failure links for the 
Aho-Corasick automaton, which take
at least $2 n \log n$ bits total. The main algorithm 
uses linked lists named $L$ and $P$, which take at least 
$2 n \log n$ bits total. Therefore the space usage is at the very 
least $4 n \log n$. This estimate is plotted in
Figure \ref{fig:time_space}.

Figure \ref{fig:space_as_function_of_time} shows the space
usage of our algorithm in the largest test dataset 
as a function of time reported by the SDSL library. The peak memory usage of the whole algorithm occurs
during index construction, and more specifically during
the construction of a compressed suffix array. The 
SDSL library used this data structure
to build the BWT and the balanced parenthesis representation,
which makes the space usage unnecessarily high.
This could be improved by using more efficient algorithms to build 
the BWT and the balanced parenthesis representation of the suffix tree topology
\cite{belazzougui2014linear}. These could be plugged in to 
bring down the index construction memory. The peak
memory of the part of the algorithm which constructs the
superstring is only approximately 5 times the size
of the input in bits.

\section{Discussion}
We have shown a practical way to implement the greedy shortest
common superstring algorithm in $O(n \log \sigma)$ time and bits of space.
After index construction, the algorithm consists of two relatively independent parts: reporting
prefix-suffix overlaps in decreasing order of lengths, and maintaining
the overlap graph to prevent merging a string to one direction more than
once and the formation of cycles. The part which reports the
overlaps could also be done in other ways, such as using  compressed
suffix trees or arrays, or a succinct representation of the Aho-Corasick
automaton. The only difficult part is to avoid having to hold 
$\Omega(n)$ integers in memory at any given time. We believe it is
possible to engineer algorithms using these data structures to
achieve $O(n \log \sigma)$ space as well. 

Regrettably, we could not find any linear time implementations of Ukkonen's greedy
shortest common superstring algorithm for comparison.
There is an interesting implementation by Liu and Sýkora
\cite{liu2005sequential}, but it is too slow for our purposes because it involves computing all 
pairwise overlap lengths of the input strings to make better choices in resolving ties in the 
greedy choices. While their experiments indicate that this improves the quality of the
approximation, the time complexity is quadratic in the number of input strings.
Zaritsky and Sipper \cite{zaritsky2004preservation} also have an implementation of the
greedy algorithm, but it's not publicly available, and the focus of the paper is on
approximation quality, not performance.
As future work, it would be interesting to make a careful
implementation of Ukkonen's greedy algorithm, and compare
it to ours experimentally.



\section*{Acknowledgements}
We would like to thank anonymous reviewers for
improving the presentation of the paper.

\bibliographystyle{splncs03} 
\bibliography{references}

\end{document}